\documentclass[11pt]{article}
\usepackage{amsmath,amssymb,amsthm}
\usepackage{booktabs}
\usepackage[margin=2.5cm]{geometry}
\usepackage[hidelinks]{hyperref}

\newtheorem{theorem}{Theorem}[section]
\newtheorem{lemma}[theorem]{Lemma}
\newtheorem{proposition}[theorem]{Proposition}
\newtheorem{corollary}[theorem]{Corollary}
\theoremstyle{remark}
\newtheorem{remark}[theorem]{Remark}
\newtheorem{example}[theorem]{Example}

\newcommand{\F}{\mathbb{F}}
\newcommand{\Z}{\mathbb{Z}}
\newcommand{\wt}{\operatorname{wt}}
\newcommand{\one}{\mathbf{1}}
\newcommand{\tr}{\operatorname{tr}}
\newcommand{\sgn}{\operatorname{sgn}}
\newcommand{\supp}{\operatorname{supp}}
\newcommand{\cC}{\mathcal{C}}
\newcommand{\bc}{\mathbf{c}}
\newcommand{\bv}{\mathbf{v}}
\newcommand{\bx}{\mathbf{x}}
\newcommand{\by}{\mathbf{y}}
\newcommand{\br}{\mathbf{r}}

\title{An integer determinant approach to the length of Type II codes and the weights of shadows}
\author{Jae-Hyun Baek\\[2pt]
\small Department of Artificial Intelligence and\\
\small Institute for Mathematical and Data Sciences\\
\small Sogang University, Seoul, Korea \\
\small \texttt{jhbaek@sogang.ac.kr}
\and
Jon-Lark Kim{\footnote{Corresponding author}}\\[2pt]
\small Department of Mathematics and\\
\small Institute for Mathematical and Data Sciences\\
\small Sogang University, Seoul, Korea\\
\small \texttt{jlkim@sogang.ac.kr}}
\date{}

\begin{document}
\maketitle

\begin{abstract}
It is well known that the length $n$ of a binary Type II code is divisible by $8$.
The classical proof relies on Gleason's theorem on weight enumerators.
Motivated by a question of Vera Pless, we give elementary proofs of this fact that avoid invariant theory.
Our main tool is a congruence modulo $8$ for the determinant of the integer Gram matrix $AA^{\mathsf T}$ of a binary orthogonal matrix $A$, obtained directly from the Leibniz expansion.
It gives a short proof of a theorem of Janusz: the number of rows of weight congruent to $3 \pmod 4$ in a binary orthogonal matrix is divisible by $4$.
From this single statement we derive both the length restriction for Type II codes and the theorem of Conway and Sloane that every vector in the shadow of a binary self-dual code of length $n$ has weight congruent to $n/2 \pmod 4$.
We also give two short character-theoretic proofs of the length restriction and explain how all of these arguments detect the Arf invariant of a quadratic form.
\end{abstract}

%%%%%%%%%%%%%%%%%%%%%%%%%%%%%%%%%%%%%%%%%%%%%%%%%%%%%%
%                   6. BODY
%%%%%%%%%%%%%%%%%%%%%%%%%%%%%%%%%%%%%%%%%%%%%%%%%%%%%%

%====================================================
\section{Introduction}
%====================================================

A binary linear code $\cC\subseteq\F_2^n$ is \emph{self-dual} if $\cC=\cC^\perp$.
It is \emph{doubly-even} if every codeword has weight divisible by $4$.
A binary doubly-even self-dual code is called a \emph{Type II} code.

\begin{theorem}[\cite{Gleason1971}]\label{thm:main}
If $\cC\subseteq\F_2^n$ is a binary Type II code of length $n$, then $8\mid n$.
\end{theorem}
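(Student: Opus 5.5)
The plan is to reduce the statement to the determinant congruence announced in the abstract, applied to a generator matrix of $\cC$. Since $\cC=\cC^\perp$, we have $\dim\cC=n/2$, so $n$ is even; write $k=n/2$. Choose a generator matrix $G$ of $\cC$ in the form $G=[I_k\mid B]$, after permuting coordinates if necessary, with $B$ a $k\times k$ binary matrix. The rows of $G$ are mutually orthogonal over $\F_2$ and each has weight $\equiv 0\pmod 4$. From $GG^{\mathsf T}=0$ over $\F_2$ we get $BB^{\mathsf T}\equiv I_k\pmod 2$, so $B$ is invertible over $\F_2$ and is itself a binary orthogonal matrix.

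The key step is to read off weights of the rows of $B$. Row $i$ of $G$ has weight $1+\wt(b_i)$, where $b_i$ is row $i$ of $B$. Double-evenness gives $\wt(b_i)\equiv 3\pmod 4$ for \emph{every} $i$. By Janusz's theorem (the number of rows of weight $\equiv 3\pmod 4$ in a binary orthogonal matrix is divisible by $4$), $4\mid k$, and hence $8\mid n$.

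To keep the argument self-contained, I would prove the needed Janusz statement via the integer Gram matrix $M=BB^{\mathsf T}$. Its diagonal entries are $\wt(b_i)$ and its off-diagonal entries $\langle b_i,b_j\rangle_{\Z}$ are even. Writing $M=D+2N$, with $D$ the diagonal part and $N$ symmetric with zero diagonal, I would expand $\det M$ by Leibniz. Identity permutations contribute $\prod \wt(b_i)$. A permutation with $t$ non-fixed points contributes a multiple of $2^{t}$, and $t\ge 2$. The $2$-cycles give terms $-4N_{ij}^2\prod_{l\ne i,j}M_{ll}$. Pairing each cycle of length $\ge 3$ with its inverse doubles the contribution, which kills it mod $8$. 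Two disjoint transpositions give a multiple of $16$.

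Modulo $8$ this yields
\[
\det M\equiv \prod_i m_i-4\sum_{i<j}N_{ij}^2\prod_{l\neq i,j}m_l \pmod 8,
\]
where $m_i=M_{ii}$ are all odd. On the other side, $\det M=(\det B)^2$ is an odd square, so $\det M\equiv 1\pmod 8$.

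The main obstacle is controlling the correction term, since it needs $\sum_{i<j}N_{ij}^2 \pmod 2$, which is the number of pairs $i<j$ with $\langle b_i,b_j\rangle_\Z\equiv 2\pmod 4$. I would handle it with the identity
\[
\wt\Bigl(\sum_i b_i\Bigr)\equiv \sum_i \wt(b_i)-2\sum_{i<j}\langle b_i,b_j\rangle_{\Z}\pmod 4,
\]
which expresses the parity of $\sum N_{ij}$ through the weight of the $\F_2$-sum $s=\sum b_i$. For general orthogonal $B$ this leads into the Janusz theorem proper. In our situation there is a shortcut: the integer inner products of rows of $G$ are even and the rows are doubly even, so $G$ spans a doubly-even code. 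In a doubly-even self-orthogonal code, any two codewords $u,v$ satisfy $\wt(u)+\wt(v)-2\langle u,v\rangle_\Z=\wt(u+v)\equiv 0\pmod 4$, so $\langle u,v\rangle_\Z\equiv 0\pmod 2$. Applying this to the rows of $G$ and subtracting the identity part, which contributes $0$ off the diagonal, gives $\langle b_i,b_j\rangle_\Z=\langle g_i,g_j\rangle_\Z$. I also need this to be $\equiv 0\pmod 4$; I expect this is where care is needed. If it does not follow directly, I would instead choose a basis of $\cC$ adapted so that these inner products vanish mod $4$, or fall back on the general Janusz argument via $s$.

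Once all $N_{ij}$ are even, the congruence gives $\prod m_i=(-1)^k\equiv... $; more precisely, with $m_i\equiv 3\pmod 4$ we need $\prod m_i\equiv 1\pmod 8$ to force $4\mid k$. This requires refining to $m_i\pmod 8$. The cleaner route is therefore to invoke Janusz's theorem exactly as stated earlier in the paper's program, and I would do that.
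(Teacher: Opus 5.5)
Your reduction to $B\in\mathrm O_k$ with every row of weight $\equiv 3\pmod 4$ is the paper's Step~1, and your Leibniz expansion is its Step~2. If you simply cite Janusz's Corollary~5, the proof is complete. But that is Janusz's own derivation, and it leaves all the content in the citation. The self-contained argument you sketch does not close, for three reasons.

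\emph{The mod-$4$ identity carries no information.} Your identity for the $\F_2$-sum $s=\sum_i b_i$ says nothing about $\sum_{i<j}N_{ij}\bmod 2$. Its cross term is $2\sum_{i<j}\langle b_i,b_j\rangle_\Z=4\sum_{i<j}N_{ij}\equiv0\pmod 4$, so it only says $\wt(s)\equiv\tr M\pmod 4$.

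\emph{The shortcut is false.} Double-evenness only makes $|u\cap v|$ even, not divisible by $4$. For $e_8$ with $B=J_4-I_4$, every off-diagonal entry of $M$ equals $2$. In fact every systematic generator matrix of $e_8$ has all pairwise row intersections equal to $2$: the rows and their pairwise sums have weight $4$, because their first four coordinates rule out $\mathbf 0$ and $\one$. So no adapted choice of basis helps.

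\emph{The final congruence is too weak.} Even if the correction vanished, $\prod_i m_i\equiv1\pmod 8$ would not force $4\mid k$. The $m_i$ are known only modulo $4$ and can be $\equiv 7\pmod 8$.

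The missing idea is to sum the rows of $B$ in $\Z^k$ instead of $\F_2^k$ (the paper's Step~3). Then $\one^{\mathsf T}M\one=\|B^{\mathsf T}\one\|^2=\sum_j s_j^2\equiv k\pmod 8$, because the column weights $s_j$ of $B$ are odd ($B^{\mathsf T}B=I_k$ over $\F_2$). We also have $\one^{\mathsf T}M\one=\tr M+2\sum_{i<j}M_{ij}$. Your correction term is congruent to $4\sum_{i<j}N_{ij}=2\sum_{i<j}M_{ij}$ modulo $8$, so the off-diagonal entries drop out:
\[
\det M\equiv\prod_i m_i+\sum_i m_i-k\pmod 8.
\]
The expression $\prod_i m_i+\sum_i m_i\bmod 8$ is unchanged under $m_i\mapsto m_i+4$, which disposes of the $3$-versus-$7$ problem. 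Putting every $m_i=3$ gives $1\equiv3^k+2k\pmod 8$.

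You also need $k$ even, which you never establish. It follows from $\one\in\cC$, which gives $4\mid n$. This step is essential, since $k\equiv3\pmod 4$ also satisfies $3^k+2k\equiv1\pmod 8$.
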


Theorem~\ref{thm:main} is the basic arithmetic constraint in the theory of Type II codes.
The condition $8\mid n$ is also sufficient: the $[8,4,4]$ extended Hamming code $e_8$ is a Type II code, and so is the direct sum of $n/8$ copies of $e_8$ for every $n\equiv0\pmod8$.
Thus Type II codes exist precisely at the lengths divisible by $8$, and the main results on Type II codes are stated for these lengths.

Examples are the mass formula for the number of Type II codes of length $n$~\cite[Section~2]{RainsSloane1998} and the classification for small lengths: there are $1$, $2$, $9$ and $85$ inequivalent Type II codes of lengths $8$, $16$, $24$ and $32$, respectively~\cite[Table~I]{RainsSloane1998}, and $94343$ of length $40$~\cite{Bet-et-al}.

Another is the Mallows--Sloane bound $d\le4\lfloor n/24\rfloor+4$ on the minimum distance~\cite[Theorem~26]{RainsSloane1998}.
Extremal Type II codes of length $24m$, such as the $[24,12,8]$ extended Golay code, have the further property that their codewords of each fixed weight form $5$-designs~\cite[Theorem~36]{RainsSloane1998}.
Finally, Type II codes are closely tied to even unimodular lattices through Construction~A; for example, $e_8$ gives the $E_8$ lattice (see Section~\ref{sec:remarks}).

The standard proof of Theorem~\ref{thm:main} goes through Gleason's theorem~\cite{Gleason1971}: the weight enumerator of a Type II code is a polynomial in the weight enumerators of the $[8,4,4]$ extended Hamming code and the $[24,12,8]$ extended Golay code, which are homogeneous of degrees $8$ and $24$.
This argument determines the whole ring of possible weight enumerators, which is much more than the length restriction requires.
In the early 2000s, Vera Pless used to ask the second author whether there was an elementary way to prove Theorem~\ref{thm:main} (personal communication).
This question motivates the present paper.

Our main observation is that Theorem~\ref{thm:main} follows from an elementary congruence for integer Gram matrices (Sections~\ref{sec:det}).
Let $A$ be a binary orthogonal matrix, regarded as a $0/1$ integer matrix.
Modulo $8$, the Leibniz expansion of $\det(AA^{\mathsf T})$ reduces to the identity and the transpositions, and the total entry sum of $AA^{\mathsf T}$ is controlled by the column weights of $A$.
Comparing the result with $(\det A)^2\equiv1\pmod8$ gives a short proof of a theorem of Janusz~\cite{Janusz2007}: the number of rows of $A$ of weight $\equiv3\pmod4$ is divisible by $4$ (Proposition~\ref{prop:janusz}).
Applied to a generator matrix $[I_k\mid A]$ of a self-dual code, this single statement yields Theorem~\ref{thm:main}.
It also yields the theorem of Conway and Sloane~\cite{ConwaySloane1990} that every vector in the shadow of a self-dual code of length $n$ has weight $\equiv n/2\pmod4$ (Theorem~\ref{thm:shadow}); here the shadow has an explicit representative read off from the rows of $A$.

For comparison, we also give two character-theoretic proofs of Theorem~\ref{thm:main}, one via a Gauss sum (Section~\ref{sec:gauss}) and one via the MacWilliams identity~\cite{MacWilliams1963} evaluated at a single point (Section~\ref{sec:mw}).
All of these arguments detect the same obstruction, namely the Arf invariant of the quadratic form $\wt(\bx)/2 \bmod 2$ on the even-weight vectors (Remark~\ref{rem:arf}).
Some other arguments that avoid Gleason's polynomials can be found in the literature (see Section~\ref{sec:related}), but they rely on other machinery, such as spinor groups, the Witt--Arf theory of quadratic forms, or group-order computations in the orthogonal group.
Our contribution is an elementary and unified route to these results.
Section~\ref{sec:remarks} contains concluding remarks.

\subsection{Related work}\label{sec:related}

Several arguments for Theorem~\ref{thm:main} that avoid Gleason's polynomials are already in the literature.
We collect them here and indicate how they relate to our proofs.

%\paragraph{Invariant theory without the full ring of invariants.}
In the survey of Rains and Sloane~\cite[Section~6]{RainsSloane1998}, the Molien series of Gleason's group of order $192$ is computed as $1/((1-\lambda^8)(1-\lambda^{24}))$.
The authors remark that this alone already shows that the length of a doubly-even self-dual code is a multiple of $8$.
Elkies~\cite{Elkies2000} gives a shorter version of this idea.
Gleason's group contains the scalar matrices $\zeta I$ for all $8$th roots of unity $\zeta$.
Hence every homogeneous invariant has degree divisible by $8$.
Our third proof is a one-point form of this argument; see Section~\ref{sec:remarks}.

%\paragraph{Orthogonal geometry and spinor groups.}
Wood~\cite[Corollary~3.24]{Wood1989} proves Theorem~\ref{thm:main} by identifying doubly-even self-orthogonal codes with elementary abelian $2$-subgroups of $\mathrm{Spin}(n)$.
In coding terms, he uses the quadratic form $\wt(\bx)/2 \bmod 2$ on the even-weight vectors, which is our form $q$ in Section~\ref{sec:gauss}.
By the theorems of Witt and Arf, all maximal totally singular subspaces have the same dimension.
Wood computes this dimension from explicit examples.
It equals $n/2$, for even $n$, only when $n\equiv0\pmod8$.
Rains and Sloane~\cite[Section~2]{RainsSloane1998} likewise view a Type II code as a maximal totally singular subspace of an orthogonal geometry of dimension $n-2$.
Our second proof replaces the Witt--Arf theory by a single Gauss sum.

%\paragraph{Orthogonal matrices.}
Janusz~\cite{Janusz2007} studies the group $\mathrm O_m$ of $m\times m$ orthogonal matrices over $\F_2$.
He also studies its subgroup $\mathrm O_m^{(1)}$ of matrices all of whose rows have weight $\equiv1\pmod4$.
By computing the index $[\mathrm O_m:\mathrm O_m^{(1)}]$ and explicit coset representatives, he proves the following~\cite[Corollary~5]{Janusz2007}.
For every $A\in\mathrm O_m$, the number of rows of $A$ of weight $\equiv3\pmod4$ is a multiple of $4$.
Applied to a generator matrix $[I_k\mid A]$, this gives Theorem~\ref{thm:main}~\cite[Corollaries~6 and~7]{Janusz2007}.
Our first proof starts from the same reduction to an orthogonal matrix $A$.
It then proves the needed case by a determinant computation instead of a group-order computation.
The same computation also gives Janusz's general statement (Proposition~\ref{prop:janusz}); this is a different proof of a known result, not a new theorem.

%\paragraph{Determinants modulo $8$.}
In their proof that Type II codes lift to self-dual codes over $\Z_4$ with all norms divisible by $8$~\cite[Theorem~3(b)]{RainsSloane1998}, Rains and Sloane use a $\Z_4$-lift $\hat A$ of the matrix $A$ in a generator matrix $[I\mid A]$.
They use the congruence $\det(\hat A\hat A^{\mathsf T})\equiv1+4\tr\Delta\pmod8$, where $\Delta=\tfrac14(\hat A\hat A^{\mathsf T}+I)$.
That argument concerns the lifting problem, not the length of the code.
However, it uses the same kind of determinant computation modulo $8$ as our first proof.

\subsection{Notation and preliminaries}\label{sec:prelim}

Throughout, $\bx\cdot \by=\sum_j x_jy_j\in\F_2$ denotes the standard inner product.
We write $\bx\cap \by$ for the intersection of the supports of $\bx$ and $\by$, and $\one$ for the all-ones vector.
For $\bx,\by\in\F_2^n$,
\begin{equation}\label{eq:wtadd}
\wt(\bx+\by)=\wt(\bx)+\wt(\by)-2|\bx\cap \by|,\qquad |\bx\cap \by|\equiv \bx\cdot \by\pmod 2.
\end{equation}

Let $\cC$ be a Type II code of length $n$.
Since $\cC=\cC^\perp$, we have $\dim \cC=n/2$, so $n$ is even and $|\cC|=2^{n/2}$.
Every codeword has even weight, so $\one\cdot \bc=0$ for all $\bc\in \cC$.
Hence $\one\in \cC^\perp=\cC$.

\begin{lemma}\label{lem:char}
For a subspace $\cC$ of $\F_2^n$ and $\bv\in\F_2^n$,
\[
\sum_{\bc\in \cC}(-1)^{\bv\cdot \bc}=
\begin{cases}
|\cC|, & \bv\in \cC^\perp,\\
0, & \bv\notin \cC^\perp.
\end{cases}
\]
\end{lemma}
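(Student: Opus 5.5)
The plan is to split according to whether $\bv\in\cC^\perp$ and to use only the fact that the map $\chi_\bv\colon\cC\to\{\pm1\}$, $\bc\mapsto(-1)^{\bv\cdot\bc}$, is a group homomorphism. This holds because the inner product is bilinear, so $\bv\cdot(\bc+\bc')=\bv\cdot\bc+\bv\cdot\bc'$ in $\F_2$, and hence $(-1)^{\bv\cdot(\bc+\bc')}=(-1)^{\bv\cdot\bc}(-1)^{\bv\cdot\bc'}$.

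\emph{Case $\bv\in\cC^\perp$.} By definition $\bv\cdot\bc=0$ for every $\bc\in\cC$. Every term of the sum therefore equals $1$, and the sum is $|\cC|$.

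\emph{Case $\bv\notin\cC^\perp$.} Here there is some $\bc_0\in\cC$ with $\bv\cdot\bc_0=1$. I will use the standard translation trick. Since $\cC$ is a subspace, $\bc\mapsto\bc+\bc_0$ is a bijection of $\cC$ onto itself. Reindexing the sum $S=\sum_{\bc\in\cC}(-1)^{\bv\cdot\bc}$ by this bijection and applying the homomorphism property gives
\[
S=\sum_{\bc\in\cC}(-1)^{\bv\cdot(\bc+\bc_0)}=(-1)^{\bv\cdot\bc_0}S=-S.
\]
Since $S$ is an integer, $S=-S$ forces $S=0$.

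An equivalent view of the second case: $\bc\mapsto\bv\cdot\bc$ is a nonzero linear functional on $\cC$. Its kernel therefore has index $2$, so exactly half of the codewords contribute $+1$ and half contribute $-1$. None of the steps is a real obstacle. The only point needing care is that the sum is computed in $\Z$ while the exponent lives in $\F_2$. This is harmless because $(-1)^a$ depends only on $a \bmod 2$, so it is well defined on $\F_2$.
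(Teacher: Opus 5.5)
Your proof is correct and is essentially the paper's argument. The first case is identical. For the second case the paper uses exactly your ``equivalent view'': the functional $\bc\mapsto\bv\cdot\bc$ is surjective, so it takes each value $|\cC|/2$ times. Your translation trick $S=-S$ is a repackaging of the same fact.
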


\begin{proof}
If $\bv\in \cC^\perp$, every term equals $1$.
If $\bv\notin \cC^\perp$, the map $\bc\mapsto \bv\cdot \bc$ is a surjective linear map $\cC\to\F_2$.
Hence it takes each of the values $0$ and $1$ exactly $|\cC|/2$ times, and the sum is $0$.
\end{proof}

%====================================================
\section{First proof: an integer Gram determinant}\label{sec:det}
%====================================================

This proof uses only a generator matrix, an integer Gram matrix $B$, its determinant modulo $8$, and the total sum of its entries.
As in~\cite{Janusz2007}, we first reduce to a statement about an orthogonal matrix over $\F_2$.

\begin{proof}[Proof of Theorem~\ref{thm:main}]
Put $k=n/2=\dim \cC$. The following steps prove the theorem.

\medskip\noindent
\textbf{Step 0: $k$ is even.}
We have $\one\in \cC$, since every codeword has even weight and $\cC$ is self-dual (Section~\ref{sec:prelim}). Hence $n=\wt(\one)\equiv0\pmod4$.

\medskip\noindent
\textbf{Step 1: The integer Gram matrix.}
After a coordinate permutation, $\cC$ has a generator matrix $[I_k\mid A]$.
Self-duality gives $AA^{\mathsf T}=I_k$ over $\F_2$.
Row $i$ of the generator has weight $1+\wt(\br_i)$, where $\br_i$ is row $i$ of $A$.
So the doubly-even condition implies $\wt(\br_i)\equiv3\pmod4$.

Regard $A$ as a $0/1$ integer matrix and put $B=AA^{\mathsf T}\in M_k(\Z)$.
Then
\[
B_{ii}=\wt(\br_i)\equiv3\pmod4,\qquad
B_{ij}=|\supp \br_i\cap\supp \br_j|\equiv0\pmod2\quad(i\ne j).
\]
Since $A$ is invertible modulo $2$, $\det A$ is odd.
Hence
\begin{equation}\label{eq:sq}
\det B=(\det A)^2\equiv1\pmod8.
\end{equation}

\medskip\noindent
\textbf{Step 2: Leibniz expansion for the determinant modulo $8$.}
Write
$$\det B=\sum_{\sigma\in S_k}\sgn(\sigma)\prod_iB_{i\sigma(i)}.$$
In the term of $\sigma$, a fixed index ($\sigma(i)=i$) contributes an odd diagonal factor $B_{ii}$.
A moved index ($\sigma(i)\ne i$) contributes an even off-diagonal factor $B_{i\sigma(i)}$.
So the number of even factors equals the number of indices moved by $\sigma$.
\begin{itemize}
\item If $\sigma$ moves at least three indices, its term is divisible by $2^3=8$ and vanishes modulo $8$.
\item No permutation moves exactly one index.
\item What remains is the identity, contributing $\prod_iB_{ii}$, and the transpositions $(ij)$.
\end{itemize}
Since $(ij)=(ji)$, each transposition corresponds to one unordered pair, which we index by $i<j$.
Its sign is $-1$, and by symmetry its term is
\[
-\,B_{ij}B_{ji}\prod_{\ell\ne i,j}B_{\ell\ell}=-\,B_{ij}^2\,D_{ij},
\qquad D_{ij}:=\prod_{\ell\ne i,j}B_{\ell\ell}\ \text{(odd)}.
\]
Write $B_{ij}=2t$.
Then it is easy to see that $B_{ij}^2=4t^2\equiv4t=2B_{ij}\pmod8$, since $t^2\equiv t\pmod2$.
Moreover, a multiple of $4$ is unchanged modulo $8$ by an odd factor, because $4t(2s+1)=4t+8ts$.
Hence, since $D_{ij}$ is odd we have $-B_{ij}^2D_{ij}\equiv-2B_{ij}\pmod8$, and
\begin{equation}\label{eq:leib}
\det B\equiv\prod_iB_{ii}-2\sum_{i<j}B_{ij}\pmod8.
\end{equation}

\medskip\noindent
\textbf{Step 3: The all-ones vector.}
We compute $\one^{\mathsf T}B\one$ through the columns of $A$, working in $\Z$.
Let $s_j=\sum_iA_{ij}$ be the weight of column $j$ of $A$, so that $A^{\mathsf T}\one=(s_1,\dots,s_k)^{\mathsf T}$.
Then
\begin{equation}\label{eq:ones}
\one^{\mathsf T}B\one=\one^{\mathsf T}A A^{\mathsf T} \one = (A^{\mathsf T}\one)^{\mathsf T}(A^{\mathsf T}\one)=\|A^{\mathsf T}\one\|^2=\sum_{j=1}^k s_j^2\equiv k\pmod8.
\end{equation}
The last congruence holds because every $s_j$ is odd.
Indeed, over $\F_2$ the square matrix $A$ satisfies $A^{-1}=A^{\mathsf T}$, so $A^{\mathsf T}A=I_k$.
The $(j,j)$ entry of $A^{\mathsf T}A$ is $s_j\bmod2$, so $s_j$ is odd.
For an odd $s=2u+1$ we have $s^2=4u(u+1)+1\equiv1\pmod8$, since $u(u+1)$ is even.

On the other hand, summing the entries of $B$ over all ordered pairs and using $B_{ij}=B_{ji}$ gives $\one^{\mathsf T}B\one= \sum_{i,j} B_{ij} = \tr B+2\sum_{i<j}B_{ij}$.
Substituting into Equation \eqref{eq:leib} gives
\begin{equation}\label{eq:key}
\det B\equiv\prod_iB_{ii}+\sum_iB_{ii}-k\pmod8.
\end{equation}

\medskip\noindent
\textbf{Step 4: Evaluate with all diagonal entries equal to $3$.}
Let $P(d_1,\dots,d_k)=\prod_id_i+\sum_id_i$ with all $d_i\equiv3\pmod4$.
Replacing one $d_i$ by $d_i+4$ changes the product by $4\cdot(\text{odd})\equiv4$ and the sum by $4$.
So $P\bmod8$ is unchanged, and we may take every $d_i=3$:
\[
\prod_iB_{ii}+\sum_iB_{ii}\equiv3^k+3k\pmod8.
\]
Combining this with Equations \eqref{eq:sq} and \eqref{eq:key}, and using $3^k\equiv1\pmod 8$ for even $k$ (Step~0), we get
\[
1\equiv3^k+3k-k\equiv1+2k\pmod8.
\]
Hence $8\mid2k$, so $4\mid k$ and $8\mid n$.
\end{proof}

\begin{remark}
Step~0 cannot be dropped: $k\equiv3\pmod4$ also satisfies $3^k+2k\equiv1\pmod8$, since $27+6=33\equiv1$.
\end{remark}

Steps~1--4 show that if every row of $A\in\mathrm O_k$ has weight $\equiv3\pmod4$, then $4\mid k$.
The same determinant computation proves the following general statement, due to Janusz~\cite[Corollary~5]{Janusz2007}.
We do not claim it as a new result.
We include it to give a direct determinant proof in place of the group-order computation of~\cite{Janusz2007}.

\begin{proposition}[Janusz~{\cite[Corollary~5]{Janusz2007}}]\label{prop:janusz}
Let $A\in\mathrm O_k$, i.e.\ $A\in\F_2^{k\times k}$ with $AA^{\mathsf T}=I_k$.
Then the number of rows of $A$ of weight $\equiv3\pmod4$ is a multiple of $4$.
\end{proposition}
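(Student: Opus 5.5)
We follow Steps~1--4 of the first proof, now allowing the row weights of $A$ to be arbitrary. The rows of $A$ have odd weight, since $AA^{\mathsf T}=I_k$ over $\F_2$ gives $\wt(\br_i)\equiv1\pmod2$. So every row weight is $\equiv1$ or $\equiv3\pmod4$. Let $m$ be the number of rows of weight $\equiv3\pmod4$; the goal is to show $4\mid m$.

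Regard $A$ as a $0/1$ integer matrix and put $B=AA^{\mathsf T}\in M_k(\Z)$. Then the diagonal entries $B_{ii}=\wt(\br_i)$ are odd and the off-diagonal entries are even. Also $\det B=(\det A)^2\equiv1\pmod8$. None of Steps~2 and~3 used the congruence $B_{ii}\equiv3\pmod4$; they used only that the diagonal is odd, the off-diagonal is even, and the column weights $s_j$ are odd. The last fact comes from $A^{\mathsf T}A=I_k$ over $\F_2$. Hence congruence~\eqref{eq:key} holds verbatim:
\[
1\equiv\det B\equiv\prod_iB_{ii}+\sum_iB_{ii}-k\pmod8.
\]

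Next we evaluate $P=\prod_i d_i+\sum_i d_i$ modulo $8$, where $d_i=B_{ii}$. The invariance argument of Step~4 applies to each $d_i$ separately: replacing an odd $d_i$ by $d_i+4$ changes both the product and the sum by $4\pmod8$. So we may take $d_i=3$ for the $m$ rows of weight $\equiv3\pmod4$ and $d_i=1$ for the other $k-m$ rows. This gives $P\equiv3^m+3m+(k-m)\pmod8$, and the congruence above becomes
\[
1\equiv3^m+2m\pmod8.
\]

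The final step is a case analysis on $m\bmod4$, and it is the only place needing care. As the Remark after the first proof shows, the congruence $1\equiv3^m+2m$ alone does not force $4\mid m$. For even $m$ we have $3^m\equiv1$, so the congruence gives $2m\equiv0\pmod8$, i.e.\ $4\mid m$. For odd $m$ we have $3^m\equiv3$, so $2m\equiv6\pmod8$, i.e.\ $m\equiv3\pmod4$. We must therefore exclude the case $m$ odd with a separate parity argument, playing the role of Step~0.

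To do this, we compute $\wt$ of $\br=\sum_i\br_i$ in two ways. First, column $j$ of $A$ contributes $s_j\bmod2=1$ to $\br$, so $\br=\one$ and $\wt(\br)=k$. Second, iterating \eqref{eq:wtadd} gives $\wt(\br)\equiv\sum_i\wt(\br_i)\pmod4$, because every intersection $|\br_i\cap\br_j|$ with $i\ne j$ is even and so the correction terms $2|\br_i\cap\br_j|$ vanish modulo $4$. Therefore
\[
k\equiv(k-m)+3m=k+2m\pmod4,
\]
so $2m\equiv0\pmod4$ and $m$ is even. Combined with the even case above, this yields $4\mid m$.
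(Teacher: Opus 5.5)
Your proof is correct and follows the paper's argument: you use the same congruence \eqref{eq:key}, the same reduction to $d_i\in\{1,3\}$ giving $1\equiv 3^m+2m\pmod 8$, and a separate parity argument to rule out odd $m$. Your parity step computes $\wt(\sum_i\br_i)=k$ via \eqref{eq:wtadd}. Since $|\br_i\cap\br_j|=B_{ij}$, this is the same modulo-$4$ computation as the paper's comparison $k+2a\equiv\tr B\equiv\one^{\mathsf T}B\one\equiv k\pmod 4$.
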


\begin{proof}
Regard $A$ as a $0/1$ integer matrix and put $B=AA^{\mathsf T}\in M_k(\Z)$, as in Step~1.
Steps~2 and~3 use only three facts: the diagonal entries of $B$ are odd, its off-diagonal entries are even, and the columns of $A$ have odd weight.
All three hold for every $A\in\mathrm O_k$.
Hence \eqref{eq:sq}, \eqref{eq:ones} and \eqref{eq:key} remain valid:
\[
\det B\equiv1,\qquad \one^{\mathsf T}B\one\equiv k,\qquad
\det B\equiv\prod_iB_{ii}+\sum_iB_{ii}-k\pmod8.
\]
Write $d_i=B_{ii}=\wt(\br_i)=1+2\varepsilon_i+4h_i$ with $\varepsilon_i\in\{0,1\}$ and $h_i\in\Z$.
Then $a:=\sum_i\varepsilon_i$ is the number of rows of weight $\equiv3\pmod4$.

\emph{First we show that $a$ is even.}
Since $\tr B=\sum_id_i=k+2a+4\sum_ih_i$, every off-diagonal entry is even, and $B$ is symmetric, we have
\[
k+2a\equiv\tr B\equiv\tr B+2\sum_{i<j}B_{ij}=\one^{\mathsf T}B\one\equiv k\pmod4.
\]
Hence $2a\equiv0\pmod4$, so $a$ is even.

\emph{Second, we show that $4\mid a$.}
As in Step~4, replacing an odd $d_i$ by $d_i+4$ changes $\prod_id_i$ by $4\cdot(\text{odd})\equiv4$ and $\sum_id_i$ by $4$.
So $\prod_id_i+\sum_id_i\bmod8$ is unchanged, and we may take $d_i=1+2\varepsilon_i\in\{1,3\}$:
\[
\prod_iB_{ii}+\sum_iB_{ii}\equiv3^a+(k-a)+3a=3^a+k+2a\pmod8.
\]
Substituting into \eqref{eq:key} and using $3^a\equiv1\pmod8$ since $a$ is even,
\[
1\equiv\det B\equiv3^a+2a\equiv1+2a\pmod8.
\]
Hence $8\mid2a$, that is, $4\mid a$. This completes the proof.
\end{proof}

\begin{example}\label{ex:hamming}
Let $k=4$ and $A=J_4-I_4$, where $J_4$ is the $4\times4$ all-ones matrix, so that $[I_4\mid A]$ generates the $[8,4,4]$ extended Hamming code.
Every row and column of $A$ has weight $3$, and two distinct rows share two coordinates.
Hence $B=AA^{\mathsf T}$ has diagonal entries $3$ and off-diagonal entries $2$:
\[
B=I_4+2J_4,\qquad\det B=9.
\]
The $24$ permutations of $S_4$, grouped by type, contribute as follows.
\begin{center}
\footnotesize
\renewcommand{\arraystretch}{1.25}
\setlength{\tabcolsep}{3pt}
\begin{tabular}{@{}lcccc@{}}
\hline
Type & Number & Moved & Term & Mod $8$\\
\hline
identity $e$ & $1$ & $0$ & $3\cdot3\cdot3\cdot3=81$ & $1$\\
transp., e.g.\ $(12)$ & $6$ & $2$ & $-B_{12}B_{21}B_{33}B_{44}=-2\cdot2\cdot3\cdot3=-36$ & $4$\\
$3$-cycle, e.g.\ $(123)$ & $8$ & $3$ & $+B_{12}B_{23}B_{31}B_{44}=2\cdot2\cdot2\cdot3=24$ & $0$\\
double transp., e.g.\ $(12)(34)$ & $3$ & $4$ & $+2^4=16$ & $0$\\
$4$-cycle, e.g.\ $(1234)$ & $6$ & $4$ & $-2^4=-16$ & $0$\\
\hline
\end{tabular}
\end{center}

\begin{itemize}
\item \emph{Full sum:}
$\det B=81-6\cdot36+8\cdot24+3\cdot16-6\cdot16=81-216+192+48-96=9$.
\item \emph{Terms moving at least three indices:}
$192+48-96=144=8\cdot18\equiv0\pmod8$.
\item \emph{Identity and transpositions only:}
$81-216=-135\equiv1\pmod8$, which agrees with $\det B=9\equiv1$.
\item \emph{Removing the odd factor:}
for $(12)$, $D_{12}=B_{33}B_{44}=9$, and $-36\equiv4\equiv-4=-2B_{12}\pmod8$.
\item \emph{Total entry sum:}
column weights give $\sum_js_j^2=4\cdot9=36\equiv4=k\pmod8$.
Adding the entries of $B$ directly gives $4\cdot3+12\cdot2=36$ as well.
\item \emph{Formula \eqref{eq:key}:}
$\prod B_{ii}+\tr B-k=81+12-4=89\equiv1\pmod8$.
\end{itemize}
\end{example}

%====================================================
\section{Second proof: a quadratic form and a Gauss sum}\label{sec:gauss}
%====================================================

Let
\[
E=\{\bx\in\F_2^n : \wt(\bx)\equiv 0 \pmod 2\}
\]
be the space of even-weight vectors.
Every codeword of $\cC$ has even weight, so $\cC\subseteq E$.

\paragraph{The quadratic form.}
For $\bx\in E$ define $q(\bx)=\tfrac{\wt(\bx)}{2}\bmod 2\in\F_2$.
By Equation \eqref{eq:wtadd}, for $\bx,\by\in E$,
\begin{equation}\label{eq:quad}
q(\bx+\by)=q(\bx)+q(\by)+\bx\cdot \by .
\end{equation}
Thus $q$ is a quadratic form on $E$ whose polar form is the standard inner product.
The doubly-even condition says precisely that $q|_{\cC}\equiv 0$.
Moreover, by definition, $(-1)^{q(\bx)}=i^{\wt(\bx)}$.

\paragraph{The Gauss sum.}
Consider
\[
S=\sum_{\bx\in E}(-1)^{q(\bx)}=\sum_{\bx\in E} i^{\wt(\bx)}.
\]

\emph{(a) Computation via the code.}
Write $E$ as the disjoint union of the cosets $\bv+\cC$.
Since $q(\bc)=0$ for $\bc\in \cC$, identity \eqref{eq:quad} gives
\[
\sum_{\bc\in \cC}(-1)^{q(\bv+\bc)}
=(-1)^{q(\bv)}\sum_{\bc\in \cC}(-1)^{\bv\cdot \bc}.
\]
By Lemma~\ref{lem:char}, this equals $(-1)^{q(\bv)}|\cC|$ if $\bv\in \cC^\perp=\cC$, and $0$ otherwise.
Hence the only coset that contributes is $\cC$ itself.
Taking $\bv\in \cC$, we have $q(\bv)=0$, so
\[
S=|\cC|=2^{n/2}.
\]

\emph{(b) Direct computation.}
There are $\binom{n}{w}$ vectors of weight $w$, so by the binomial theorem
\[
S=\sum_{w\ \text{even}}\binom{n}{w}i^{w}
=\operatorname{Re}\sum_{w=0}^{n}\binom{n}{w}i^{w}
=\operatorname{Re}\,(1+i)^n
=2^{n/2}\cos\frac{n\pi}{4}.
\]
Here we used that $i^w$ is purely imaginary for odd $w$, and that $1+i=\sqrt2\,e^{i\pi/4}$.

\paragraph{Conclusion.}
Comparing (a) and (b) gives $2^{n/2}\cos\frac{n\pi}{4}=2^{n/2}$.
Hence $\cos\frac{n\pi}{4}=1$, which is equivalent to $8\mid n$.
This proves Theorem~\ref{thm:main}. \qed

\medskip

\begin{remark}\label{rem:arf}
Since $n$ is even, $\one\in E$ and $\one\in E^\perp$.
Also $\one\in \cC$, so $q(\one)=0$, and therefore $q$ descends to $\overline E=E/\langle\one\rangle$.
This is a nondegenerate quadratic form on a space of dimension $n-2$.
For a nondegenerate quadratic form $Q$ on a $2m$-dimensional space $V$ over $\F_2$, the Arf invariant $\operatorname{Arf}(Q)\in\F_2$~\cite{Arf1941},~\cite{CimasoniSimian} is characterized by
\[
\sum_{\bx\in V}(-1)^{Q(\bx)}=(-1)^{\operatorname{Arf}(Q)}\,2^m.
\]
The subspace $\overline \cC=\cC/\langle\one\rangle$ is totally singular of dimension $n/2-1$, i.e.,\ a \emph{totally singular Lagrangian}.
In characteristic $2$ this is stronger than being Lagrangian for the polar form, since the polar form is alternating.

If a totally singular Lagrangian exists, then the Arf invariant of the form is $0$:
this is exactly the coset argument in (a), which shows that the Gauss sum is positive.
(The converse also holds, but we do not need it.)
On the other hand, for $n\equiv0\pmod4$ the Arf invariant of $\overline E$ is $n/4 \bmod 2$.
Indeed, for $\bx\in E$ we have $q(\bx+\one)=q(\bx)+q(\one)+\bx\cdot\one=q(\bx)$.
So the sum $S$ over $E$ is twice the corresponding sum over $\overline E$, and by (b) the latter equals $2^{(n-2)/2}\cos\frac{n\pi}{4}=(-1)^{n/4}\,2^{(n-2)/2}$.
%For example, when $n=4$, the three nonzero elements $\mathbf{1100},\mathbf{1010},\mathbf{1001}$ of $\overline E$ all satisfy $q=1$.
%Thus the line $\langle\mathbf{1100}\rangle$ is Lagrangian for the polar form but is not totally singular, and $\overline E$ has no totally singular Lagrangian.
%The Gauss sum computation above carries out this Arf invariant test directly.
%This orthogonal-geometry point of view is the one behind~\cite[Corollary~3.24]{Wood1989} and~\cite[Section~2]{RainsSloane1998}; see Section~\ref{sec:related}.
\end{remark}

%====================================================
\section{Third proof: the MacWilliams identity at one point}\label{sec:mw}
%====================================================

Let
\[
W_{\cC}(x,y)=\sum_{\bc\in \cC}x^{\,n-\wt(\bc)}y^{\wt(\bc)}
\]
be the weight enumerator of $\cC$.
The MacWilliams identity~\cite{MacWilliams1963} states
\[
W_{\cC^\perp}(x,y)=\frac{1}{|\cC|}\,W_{\cC}(x+y,\;x-y).
\]
Since $\cC=\cC^\perp$ and $|\cC|=2^{n/2}$,
\begin{equation}\label{eq:mw}
W_{\cC}(x,y)=2^{-n/2}\,W_{\cC}(x+y,\;x-y).
\end{equation}

Now we substitute $(x,y)=(1,i)$ into Equation \eqref{eq:mw}.

\emph{Left-hand side.}
Every $\bc\in \cC$ satisfies $4\mid\wt(\bc)$, so $i^{\wt(\bc)}=1$, and hence
\[
W_{\cC}(1,i)=|\cC|=2^{n/2}.
\]

\emph{Right-hand side.}
$W_{\cC}$ is homogeneous of degree $n$, and $\dfrac{1-i}{1+i}=-i$.
Therefore
\[
W_{\cC}(1+i,\,1-i)=(1+i)^n\,W_{\cC}\!\left(1,\tfrac{1-i}{1+i}\right)=(1+i)^n\,W_{\cC}(1,-i).
\]
Since $(-i)^{\wt(\bc)}=1$, we have $W_{\cC}(1,-i)=2^{n/2}$, so the right-hand side equals
\[
2^{-n/2}\,(1+i)^n\,2^{n/2}=(1+i)^n .
\]

\paragraph{Conclusion.}
We obtain $(1+i)^n=2^{n/2}$.
Since $(1+i)^n=2^{n/2}e^{in\pi/4}$, it follows that $e^{in\pi/4}=1$, and therefore $8\mid n$.
This proves Theorem~\ref{thm:main}. \qed

%====================================================
\section{An application: the shadow of a self-dual code}\label{sec:shadow}
%====================================================

In this section $\cC$ is a binary self-dual code of length $n=2k$ that need not be doubly-even.
For $\bx,\by\in\cC$ we have $\bx\cdot\by=0$, so by \eqref{eq:quad} the map $q\colon\cC\to\F_2$, $q(\bc)=\wt(\bc)/2\bmod2$, is linear.
Its kernel $\cC_0$ is the doubly-even subcode of $\cC$.
Following Conway and Sloane~\cite{ConwaySloane1990}, the \emph{shadow} of $\cC$ is
\[
S(\cC)=
\begin{cases}
\cC_0^\perp\setminus\cC, & \text{if $\cC$ is not doubly-even},\\
\cC, & \text{if $\cC$ is doubly-even};
\end{cases}
\]
see also~\cite{RainsSloane1998}.
It is classical that every vector of $S(\cC)$ has weight $\equiv n/2\pmod4$~\cite{ConwaySloane1990}, \cite{RainsSloane1998}.
The usual proof uses the shadow weight enumerator $W_{\cC}(x+y,\,i(x-y))/|\cC|$.
We give a proof based on an explicit representative of the shadow and on Proposition~\ref{prop:janusz}.

\begin{theorem}[\cite{ConwaySloane1990}]\label{thm:shadow}
Let $\cC$ be a binary self-dual code of length $n$.
Then every vector of $S(\cC)$ has weight congruent to $n/2\pmod4$.
\end{theorem}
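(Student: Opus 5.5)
Assume $\cC$ is not doubly-even; otherwise $S(\cC)=\cC$, $\cC$ is Type II, and Theorem~\ref{thm:main} gives $8\mid n$, so $n/2\equiv0\pmod4$ and all weights are $\equiv0\equiv n/2\pmod4$. Then $\cC_0$ has index $2$ in $\cC$, so $\cC_0^\perp\supset\cC$ with index $2$. Hence $S(\cC)=\bv+\cC$ for any single $\bv\in\cC_0^\perp\setminus\cC$. By \eqref{eq:wtadd}, for $\bc\in\cC$,
\[
\wt(\bv+\bc)=\wt(\bv)+\wt(\bc)-2|\bv\cap\bc|\equiv\wt(\bv)+2q(\bc)-2(\bv\cdot\bc)\pmod4 .
\]
So the plan has two parts: find an explicit $\bv$ with $\bv\cdot\bc=q(\bc)$ for all $\bc\in\cC$, and show $\wt(\bv)\equiv n/2\pmod4$. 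Then every shadow vector has weight $\equiv n/2\pmod4$. Note that $\bv\cdot\bc=q(\bc)$ on $\cC$ forces $\bv\in\cC_0^\perp$, and $\bv\notin\cC$ because $q\ne0$.

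\textbf{Explicit representative.} As in Step~1, permute coordinates so that $\cC$ has generator matrix $[I_k\mid A]$ with $A\in\mathrm O_k$. Row $i$, $\bc_i=(\mathbf e_i\mid\br_i)$, has weight $1+\wt(\br_i)$, so $q(\bc_i)=1$ exactly when $\wt(\br_i)\equiv1\pmod4$. Define $\bv=(\mathbf u\mid\mathbf 0)$, where $u_i=q(\bc_i)$. Then $\bv\cdot\bc_i=u_i=q(\bc_i)$. Since both sides of $\bv\cdot\bc=q(\bc)$ are linear on $\cC$, they agree on all of $\cC$.

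\textbf{Weight computation.} We have $\wt(\bv)=\#\{i:\wt(\br_i)\equiv1\pmod4\}=k-a$, where $a$ is the number of rows of $A$ of weight $\equiv3\pmod4$. Every row of $A$ has odd weight, since $AA^{\mathsf T}=I_k$, so each row weight is $\equiv1$ or $3\pmod4$ and the count is correct. Proposition~\ref{prop:janusz} gives $4\mid a$, hence $\wt(\bv)\equiv k=n/2\pmod4$. Combined with the displayed congruence, this proves the theorem.

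\textbf{Main obstacle.} The delicate point is the precise identification $S(\cC)=\bv+\cC$. It requires checking that $\cC_0^\perp$ has dimension $k+1$, which holds because $\dim\cC_0=k-1$ when $q\ne0$. The remaining steps are bookkeeping once the reduction to Proposition~\ref{prop:janusz} is made.
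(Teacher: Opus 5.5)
Your proof is correct and follows the paper's argument. Both use the explicit shadow representative $(\varepsilon,\mathbf 0)$ read off from $[I_k\mid A]$, the linearity of $q$ on $\cC$, and Proposition~\ref{prop:janusz} to get $\wt(\mathbf s)=k-a\equiv k\pmod 4$. The only difference is that you handle the doubly-even case by citing Theorem~\ref{thm:main}, whereas the paper treats both cases uniformly (there $\mathbf s=\mathbf 0$, $a=k$, and Proposition~\ref{prop:janusz} gives $4\mid k$ directly); both routes are valid.
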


\begin{proof}
Put $k=n/2$.
After a coordinate permutation, $\cC$ has a generator matrix $[I_k\mid A]$ with rows $\mathbf g_i=(\mathbf e_i,\br_i)$, and $A\in\mathrm O_k$.
Each $\wt(\br_i)$ is odd, and $\wt(\mathbf g_i)=1+\wt(\br_i)$.
Define $\varepsilon\in\F_2^k$ by $\varepsilon_i=1$ if $\wt(\br_i)\equiv1\pmod4$ and $\varepsilon_i=0$ if $\wt(\br_i)\equiv3\pmod4$, and put $\mathbf s=(\varepsilon,\mathbf 0)\in\F_2^n$.

\emph{Step 1: $S(\cC)=\mathbf s+\cC$.}
We have $\mathbf s\cdot\mathbf g_i=\varepsilon_i=q(\mathbf g_i)$ for every $i$.
Both sides are linear on $\cC$, so $\mathbf s\cdot\bc=q(\bc)$ for all $\bc\in\cC$.
If $\cC$ is doubly-even, then $\varepsilon=\mathbf 0$ and $\mathbf s+\cC=\cC=S(\cC)$.
Otherwise $q\neq0$, so $\cC_0=\{\bc\in\cC:\mathbf s\cdot\bc=0\}$ has index $2$ in $\cC$ and $\mathbf s\notin\cC^\perp=\cC$.
Since $\cC_0\subseteq\cC=\cC^\perp$, we have $\cC\subseteq\cC_0^\perp$, and also $\mathbf s\in\cC_0^\perp$.
As $\dim\cC_0^\perp=n-(k-1)=k+1$, it follows that $\cC_0^\perp=\cC\cup(\mathbf s+\cC)$, hence $S(\cC)=\mathbf s+\cC$.

\emph{Step 2: $\wt(\mathbf s)\equiv k\pmod4$.}
Let $a$ be the number of rows of $A$ of weight $\equiv3\pmod4$.
Then $\wt(\mathbf s)=k-a$, and $4\mid a$ by Proposition~\ref{prop:janusz}.

\emph{Step 3: all shadow vectors.}
For $\bc\in\cC$ we have $|\mathbf s\cap\bc|\equiv\mathbf s\cdot\bc=q(\bc)\pmod2$, so $2|\mathbf s\cap\bc|\equiv\wt(\bc)\pmod4$.
By \eqref{eq:wtadd},
\[
\wt(\mathbf s+\bc)=\wt(\mathbf s)+\wt(\bc)-2|\mathbf s\cap\bc|\equiv\wt(\mathbf s)\equiv k\pmod4.
\]
This proves the theorem.
\end{proof}

\begin{remark}
If $\cC$ is a Type II code, then $\mathbf s=\mathbf 0$ and $S(\cC)=\cC$, and Theorem~\ref{thm:shadow} says $0\equiv n/2\pmod4$.
This is Theorem~\ref{thm:main} again.
In contrast to the proof of Theorem~\ref{thm:main} in Section~\ref{sec:det}, Theorem~\ref{thm:shadow} needs Proposition~\ref{prop:janusz} for an arbitrary number of rows of weight $\equiv3\pmod4$.
\end{remark}

The residue in Theorem~\ref{thm:shadow} gives the classical construction of Type II neighbours~\cite[Section~5]{RainsSloane1998}.

\begin{corollary}\label{cor:neighbours}
Let $\cC$ be a binary self-dual code of length $n\equiv0\pmod8$ that is not doubly-even, and let $\mathbf s$ be as in the proof of Theorem~\ref{thm:shadow}.
Then $\cC_0+\langle\mathbf s\rangle$ and $\cC_0+\langle\mathbf s+\bc\rangle$, for any $\bc\in\cC\setminus\cC_0$, are two distinct Type II codes of length $n$.
\end{corollary}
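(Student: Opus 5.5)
We need to prove Corollary~\ref{cor:neighbours}: for a self-dual code $\cC$ of length $n\equiv0\pmod8$ that is not doubly-even, the codes $\cC_0+\langle\mathbf s\rangle$ and $\cC_0+\langle\mathbf s+\bc\rangle$ are distinct Type II codes. The plan is to verify self-orthogonality, doubly-evenness and dimension for each, using Steps~1--3 of the proof of Theorem~\ref{thm:shadow}.

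\emph{Dimension and self-orthogonality.}
From the proof of Theorem~\ref{thm:shadow}, $\mathbf s\in\cC_0^\perp$ and $\mathbf s\notin\cC$. Also $\cC_0$ has dimension $k-1$, where $k=n/2$. Hence $\cC_0+\langle\mathbf s\rangle$ has dimension $k$, since $\mathbf s\notin\cC\supseteq\cC_0$.
For any $\bc\in\cC$, the vector $\mathbf s+\bc$ lies in $S(\cC)=\mathbf s+\cC$, so $\mathbf s+\bc\notin\cC$. Moreover $\mathbf s+\bc\in\cC_0^\perp$, because $\cC\subseteq\cC_0^\perp$. So $\cC_0+\langle\mathbf s+\bc\rangle$ also has dimension $k$.
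Elements of $\cC_0$ are orthogonal to each other, being in $\cC$, and are orthogonal to the new generator. The new generator is orthogonal to itself exactly when its weight is even.

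\emph{Doubly-even.}
By Theorem~\ref{thm:shadow}, every vector of $S(\cC)$ has weight $\equiv n/2\equiv0\pmod4$, since $8\mid n$. This applies to both $\mathbf s$ and $\mathbf s+\bc$, so each generator has even weight and the codes are self-orthogonal of dimension $n/2$, hence self-dual.
Each code is the union of $\cC_0$ and a coset $\mathbf t+\cC_0$ with $\mathbf t\in\{\mathbf s,\mathbf s+\bc\}$.
Codewords in $\cC_0$ are doubly-even by definition.
The coset $\mathbf t+\cC_0$ is contained in $\mathbf s+\cC=S(\cC)$, so every vector in it has weight $\equiv0\pmod4$ by Theorem~\ref{thm:shadow}.
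Thus both codes are Type II.

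\emph{Distinctness.}
Suppose the two codes were equal. Then $\mathbf s$ and $\mathbf s+\bc$ would differ by an element of $\cC_0$, since each code equals $\cC_0\cup(\mathbf t+\cC_0)$ and neither $\mathbf s$ nor $\mathbf s+\bc$ lies in $\cC_0\subseteq\cC$. That would force $\bc\in\cC_0$, contradicting $\bc\in\cC\setminus\cC_0$.

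The only step needing care is checking that the cosets lie inside the shadow so that Theorem~\ref{thm:shadow} applies. This is immediate from $S(\cC)=\mathbf s+\cC$ (Step~1 of that proof), so there is no real obstacle.
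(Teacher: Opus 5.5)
Your proof is correct and follows the same route as the paper: self-orthogonality from $\mathbf t\in\cC_0^\perp$ and $4\mid\wt(\mathbf t)$, dimension $k$ from $\mathbf t\notin\cC$, and distinctness because the two codes would force $\bc\in\cC_0$. The only minor variation is in the doubly-even step. You apply Theorem~\ref{thm:shadow} to the whole coset $\mathbf t+\cC_0\subseteq\mathbf s+\cC=S(\cC)$, whereas the paper applies it only to the generator $\mathbf t$ and then uses \eqref{eq:wtadd} for a self-orthogonal code spanned by doubly-even vectors; both are valid.
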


\begin{proof}
Let $\mathbf t\in\{\mathbf s,\mathbf s+\bc\}$.
Then $\mathbf t\in S(\cC)$, so $\wt(\mathbf t)\equiv n/2\equiv0\pmod4$ by Theorem~\ref{thm:shadow}, and in particular $\mathbf t\cdot\mathbf t=0$.
Also $\mathbf t\in\cC_0^\perp$.
Hence $\cC_0+\langle\mathbf t\rangle$ is self-orthogonal and is spanned by vectors of weight divisible by $4$, so it is doubly-even by \eqref{eq:wtadd}.
Its dimension is $(k-1)+1=k=n/2$, since $\mathbf t\notin\cC\supseteq\cC_0$; so it is self-dual.
Since $\mathbf t\notin\cC$, each of the two codes meets $\cC$ exactly in $\cC_0$.
If they were equal, this code would contain $\mathbf s$ and $\mathbf s+\bc$, hence $\bc\in\cC\setminus\cC_0$, a contradiction.
So the two codes are distinct.
\end{proof}

%====================================================
\section{Concluding remarks}\label{sec:remarks}
%====================================================

%\paragraph{The third proof.}
The first proof uses neither characters nor weight enumerators.
Its modulo-$8$ information comes from the fact that $\det B$ is an odd square.
The proof then evaluates $\det B$ modulo $8$ using only the diagonal entries and the total entry sum of $B$.
Compared with~\cite{Janusz2007}, it replaces the computation of $[\mathrm O_k:\mathrm O_k^{(1)}]$ by this determinant congruence.
Compared with~\cite[Theorem~3(b)]{RainsSloane1998}, the lift of $A$ is the plain $0/1$ lift, which is not corrected to be orthogonal modulo $4$.
%Remark~\ref{rem:arf} explains the common source of all three proofs.
The second proof interprets the length restriction through the Arf invariant, whereas the first detects it directly through an integer determinant congruence.
The obstruction is the Arf invariant of the quadratic form $q$ on $E/\langle\one\rangle$, which must vanish for a Type II code to exist.

%\paragraph{The first two proofs.}
Part (a) of the second proof is exactly the argument that proves the MacWilliams identity, namely Lemma~\ref{lem:char}, carried out at a single point.
So the second and third proofs are essentially the same computation, written with and without the weight enumerator.

%\paragraph{Relation to Gleason's group.}
Gleason's theorem rests on the invariance of $W_{\cC}$ under the group of order $192$ generated by
\[
T_1=\tfrac{1}{\sqrt2}\begin{pmatrix}1&1\\1&-1\end{pmatrix},\qquad
T_2=\begin{pmatrix}1&0\\0&i\end{pmatrix}.
\]
The third proof uses $T_1$ and $T_2$ once each, at one point, rather than the whole group.
A direct computation gives $(T_1T_2)^3=e^{i\pi/4}I$.
Since $W_{\cC}$ is homogeneous of degree $n$, invariance under this scalar matrix also forces $e^{in\pi/4}=1$; see~\cite{Elkies2000} and~\cite[Ch.~19]{MacWilliamsSloane1977}.

%\paragraph{Lattices.}
Construction~A sends a Type II code of length $n$ to a positive definite even unimodular lattice of rank $n$.
Theorem~\ref{thm:main} is therefore the coding analogue of the fact that the rank of an even unimodular lattice is divisible by $8$~\cite{Serre1973}.
The standard proofs of the lattice statement, via theta series or via signatures, are less elementary than the proofs above.

%It would be interesting to know whether the determinant argument of Section~\ref{sec:det} can be adapted to give an elementary proof that the rank of an even unimodular lattice is divisible by $8$.

%The acknowledgments section should not be numbered.
\section*{Acknowledgement}
This research (J.-L. Kim) was supported in part by the BK21 FOUR (Fostering Outstanding Universities for Research) funded by the Ministry of Education (MOE, Korea), National Research Foundation of Korea (NRF) under Grant No. 4120240415042, Basic Science Research Program through the National Research Foundation of Korea (NRF) funded by the Ministry of Science and ICT under Grant No. RS-2025-24534992 and Global - Learning \& Academic research institution for Master’s·PhD students, and Postdocs(LAMP) Program of the National Research Foundation of Korea(NRF) grant funded by the Ministry of Education(No. RS-2024-00441954).

%\section*{Use of Generative-AI tools declaration}
%The author(s) declare(s) they have used Artificial Intelligence (AI) tools in the creation of this article.
%AI tools used: Claude Opus 5.5.
%The AI tool was used to review the manuscript and check grammatical errors.

\section*{Use of Generative-AI tools declaration}
The author(s) declare(s) they have used Artificial Intelligence (AI) tools in the creation of this article.
AI tools used: Claude Opus 5.5(Anthropic). The tool was used to help draft parts of the text and proofs, and to perform numerical checks of the statements. All mathematical contents were verified by the authors, who take full responsibility for it.

\medskip
% The information below will be filled in by AIMS production staff.
%Received xxxx 20xx; revised xxxx 20xx; early access xxxx 20xx.
\medskip

\end{document}